\documentclass[a4paper,superscriptaddress,aps,nofootinbib,longbibliography,twocolumn]{revtex4-2}
\pdfoutput=1

\usepackage{graphicx} % Required for inserting images
\usepackage{amsfonts}
\usepackage{amsmath}
\usepackage{amssymb}
\usepackage{comment}
\usepackage{amsthm}
\usepackage[normalem]{ulem}
\usepackage[colorlinks,citecolor=blue,linkcolor=blue,urlcolor=blue, breaklinks=true]{hyperref}
\newcommand{\K}{\mathcal K}
\newcommand{\J}{\mathcal J}
\newcommand{\I}{\mathcal I}

\newcommand{\ket}[1]{{| #1 \rangle}}
\newcommand{\bra}[1]{{\langle #1 |}}

\DeclareMathOperator{\tr}{tr}

\usepackage{xcolor}
\usepackage{xspace}
\usepackage{ifthen}
\newcommand{\showcomments}{true}

\newcommand{\rloca}[1]%
{\ifthenelse{\equal{\showcomments}{true}}{{\color{green}{[#1]}}}{\xspace}}%

\definecolor{navy}{RGB}{0,70,180}

\newcommand{\andrea}[1]%
{\ifthenelse{\equal{\showcomments}{true}}{{\color{navy}{[#1]}}}{\xspace}}%

\newtheorem{definition}{Definition}[]
\newtheorem{theorem}{Theorem}[]

\usepackage{ifthen}
\usepackage[normalem]{ulem}
\newcommand{\trackchanges}{true} %% change to true/false 

\definecolor{darkgreen}{RGB}{0,128,42}
\definecolor{darkred}{RGB}{179,0,0}

\newcommand{\removed}[1]{%
\ifthenelse{\equal{\trackchanges}{true}}%
{{\small{\color{darkred}{\sout{#1}}}}}%
{\xspace}%
}

\newcommand{\added}[1]{%
\ifthenelse{\equal{\trackchanges}{true}}%
{{\color{orange}{#1}}}%
{#1}%
}%
\begin{document}

\author{Carlo Cepollaro}
\thanks{These authors contributed equally to this work.}
\affiliation{Vienna Center for Quantum Science and Technology (VCQ), Faculty of Physics, University of Vienna, Boltzmanngasse 5, A-1090 Vienna, Austria}
\affiliation{Institute for Quantum Optics and Quantum Information (IQOQI),
Austrian Academy of Sciences, Boltzmanngasse 3, A-1090 Vienna, Austria}

\author{Andrea Di Biagio}
\thanks{These authors contributed equally to this work.}
\affiliation{Institute for Quantum Optics and Quantum Information (IQOQI),
Austrian Academy of Sciences, Boltzmanngasse 3, A-1090 Vienna, Austria}
\affiliation{Basic Research Community for Physics e.V., Mariannenstraße 89, Leipzig, Germany}

\title{Aumann's theorem beyond ontology:\\quantum, postquantum, and indefinite causal order}

\begin{abstract}
    \noindent Agreement theorems are no--go results about rational disagreement: if two agents start from a common prior and their posterior beliefs are common knowledge, they cannot assign different probabilities to the same event. Standard treatments of the result have the agents reason about an underlying state of the world, which has lead some to ask whether the result can extend to quantum or postquantum phenomena, where such a description may no longer be appropriate. We derive an operational version of Aumann’s agreement theorem without assuming an objective state of the world and instead focusing only on what is observed. This allows us to establish the theorem's validity in quantum theory and even in situations with indefinite causal order or involving hypothetical postquantum phenomena. We comment on seemingly contradictory results in the literature and point to the one place where the theorem might fail: Wigner's friend--type situations. 
\end{abstract}

\maketitle

\section{Introduction}

\noindent While there are general philosophical grounds to question the connection between our observations and the world, the assumption of the existence of an objective reality that is independent of our observations was seen as generally unproblematic for most of the history of science. The development of quantum theory was immediately accompanied with a radical throwing into question of this assumption. Since Heisenberg's seminal paper developing a mechanics “based exclusively on relations between quantities that are, in principle, observable”~\cite{heisenberg1925}, the debate on the relation between our observations and reality in quantum theory continues to this day~\cite{bacciagaluppi2009quantum, wood2025its}. This debate has since given rise to a series of no--go theorems showing with increasing clarity that any realist ontology underlying quantum phenomena must come at the price of a number of deeply unpalatable features~\cite{bell1964einstein,bell1966problem,kochen1967problem,bell1976theory,pusey2012reality,brukner2015quantum,bong2020strong}.

Despite its shaky philosophical foundations, the empirical success of quantum theory has been nothing short of extraordinary. Given enough training, anyone with a description of an experiment would agree on its empirical predictions in the form of expected statistics, even though different interpretations would provide wildly different accounts of what is ``actually'' going on.  Indeed, a large portion of the scientific community advocates that it is meaningless to talk about a reality existing \textit{out there} beyond our observations, because such an idea is unnecessary to account for observed phenomena~\cite{gibney2025physicists,brukner2003information,fuchs2007quantum,bub2008two,fuchs2019qbism,dibiagio2025relative}.

\pagebreak

In a framework where one seeks to avoid commitment to an underlying reality and works instead directly with probabilities for measurement outcomes, it becomes especially important to understand when different agents must arrive at the same probabilistic assignments. It is natural then to revisit classic results from Bayesian epistemics and see if they can be applied to quantum scenarios. A particularly prominent result in this direction is Aumann’s agreement theorem~\cite{aumann1976,geanakoplos1982we,contreras2021observers,brandenburgeragreement,diaz2025quantum,leifer2026generalising}, which states that if rational agents share a common prior and their posterior beliefs are common knowledge, then they cannot assign different probabilities to the same event---they cannot ``agree to disagree,'' so to speak. In its standard formulation, however, the theorem is derived under the assumption of the existence of a true state of the world. 

In this paper we show that this ontological structure is not essential. The result follows directly from the existence of a joint probability distribution over measurement outcomes together with Bayesian conditioning. This operational formulation removes the need for an underlying ontic state space and therefore extends the scope of Aumann’s theorem beyond classical models to arbitrary quantum scenarios, including non--commuting measurements, and even to indefinite causal order and more general postquantum frameworks. What is essential instead is only the existence of a shared probabilistic description of possible observations together with Bayesian conditioning.

We will first review Aumann's agreement theorem. We will then derive our main result---the operational agreement theorem---and show how it applies to very general situations. We will then discuss the relation to previous literature on extensions of Aumann's theorem to quantum and postquantum settings, including seemingly contradictory results. Finally, we will comment on the one place we can expect a failure of the agreement theorem.

\raggedbottom

\section{Classical Aumann's theorem}
\label{sec:aumann}

\noindent Consider two agents, Alice and Bob, who obtain data about the world and, on the basis of their observations, assign posterior probabilities to an event of interest. Aumann's agreement theorem gives a striking conclusion under strong but natural assumptions: if Alice and Bob begin from a \emph{common prior} on the state space and, as a result of their individual measurements, their posterior probabilities are \emph{common knowledge} (that is, Alice knows Bob's posterior, Bob knows Alice's, each knows that the other knows their posterior, and so on ad infinitum), then their posteriors must coincide. This is sometimes described as the fact that two Bayesian agents with a common prior and common knowledge \emph{cannot agree to disagree}. The precise statement is as follows.

Let $\Omega$ be a finite statespace representing the possible states of the world and $p$ be a common prior probability measure on $\Omega$. Alice performs a measurement represented by a partition $\smash{\Pi^{(A)}=\{\Pi_i^{(A)}\}_{i \in \I}}$ and obtains the result $i$, upon which she knows that the true state of the world $\omega^* \in \Omega$ is contained in the $i$--th partition element $\smash{\omega^* \in \Pi^{(A)}_i}$; upon learning her result, she assigns probability $\smash{p_A(E|\Pi^{(A)}_i):={p\big(E \cap \Pi_i^{(A)}\big)}/{p\big(\Pi_i^{(A)}\big)}}$. Bob does an entirely analogous procedure.

To define common knowledge, let $\Pi^{(A)}(\omega)$ be the partition element of $\Pi^{(A)}$ that contains $\omega$. Then
\begin{equation}
\begin{aligned}
p_A(\omega) &:= p\!\left(E \mid \Pi^{(A)}(\omega)\right), \\
p_B(\omega) &:= p\!\left(E \mid \Pi^{(B)}(\omega)\right)
\end{aligned}
\end{equation}
are the posteriors Alice and Bob would assign if the world were in $\omega$.
Fix the posteriors $q_A,q_B \in [0,1]$, and define
\begin{equation}
\begin{aligned}
A_0 &:= \{\omega \in \Omega : p_A(\omega)=q_A\}, \\
B_0 &:= \{\omega \in \Omega : p_B(\omega)=q_B\},
\end{aligned}
\end{equation}
which are the states of the world such that Alice and Bob assign posteriors $q_A$ and $q_B$ to the event $E$. Furthermore, define
\begin{equation}
\begin{aligned}
A_{n+1} &:= \{\omega \in A_n : \Pi_A(\omega)\subseteq B_n\}, \\
B_{n+1} &:= \{\omega \in B_n : \Pi_B(\omega)\subseteq A_n\}.
\end{aligned}
\end{equation}
For example, $A_1$ is the set of the states of the world such that Alice assigns posterior $q_A$ and knows that Bob assigned posterior $q_B$. 

\begin{definition}[Common knowledge of posteriors] The posteriors are common knowledge when Alice knows Bob's posterior, she knows that Bob knows hers, and she knows that Bob knows that she knows his posteriors, and so on indefinitely.
\end{definition}
\noindent In the current setup, common knowledge is achieved whenever $\omega$ is in all $A_n$ and all $B_n$. We'll say that there is \emph{common knowledge at $\omega$} if
\begin{equation}
    \omega \in \bigcap_{n=0}^\infty (A_n \cap B_n).
\end{equation}

Given a prior $p$, partitions $\Pi^{(A)}$ and $\Pi^{(B)}$ and posteriors $q_A$ and $q_B$, it is not guaranteed that such $\omega$ exists; however, 
\begin{theorem}[Aumann's Agreement Theorem]
If Alice's posterior is $q_A$ and Bob's posterior is $q_B$ and this is common knowledge at the true state of the world $\omega^* \in \Omega$,
then $q_A=q_B$.
Equivalently, two Bayesian agents with a common prior cannot have different posterior probabilities for the same event if those posteriors are common knowledge~\emph{\cite{aumann1976}}.
\end{theorem}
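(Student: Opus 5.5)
The plan is the standard Aumann argument: build the ``common knowledge event'' $C$ and show that both posteriors equal $p(E\mid C)$. Set
\begin{equation}
C := \bigcap_{n=0}^\infty (A_n \cap B_n),
\end{equation}
which contains $\omega^*$ by hypothesis, so $C$ is nonempty. Since $\Omega$ is finite, the decreasing sequences $A_n$ and $B_n$ stabilise after finitely many steps, say at $A_\infty$ and $B_\infty$. We then have $C = A_\infty \cap B_\infty$.

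The first and key step is to show that $C$ is a union of cells of $\Pi^{(A)}$, and likewise of cells of $\Pi^{(B)}$. Take $\omega \in C$ and $\omega' \in \Pi^{(A)}(\omega)$.
\begin{itemize}
\item Because $p_A$ is constant on cells, $\omega' \in A_0$.
\item The condition $\Pi^{(A)}(\omega)\subseteq B_n$ defining $A_{n+1}$ depends only on the cell, so $\omega' \in A_n$ for all $n$.
\item From $\omega \in A_{n+1}$ we get $\Pi^{(A)}(\omega') = \Pi^{(A)}(\omega) \subseteq B_n$ for all $n$, hence $\omega' \in B_\infty$.
\end{itemize}
So $\Pi^{(A)}(\omega) \subseteq C$. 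The symmetric argument handles Bob's partition. This step needs the most care. The recursion as written only gives $\omega'\in B_n$ through the inclusion $\Pi^{(A)}(\omega)\subseteq B_n$, so I must check the indices: $\omega\in A_{n+1}$ for every $n$ indeed yields $\Pi^{(A)}(\omega)\subseteq B_n$ for every $n$.

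Second, I would handle probability zero. I assume, as the posteriors $q_A,q_B$ implicitly require, that every relevant cell has positive prior probability. Then $p(C)>0$, because $C$ contains the cell of $\omega^*$.

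Third, write $C=\bigsqcup_k \Pi^{(A)}_{i_k}$ as a disjoint union of Alice's cells. Each cell lies in $A_0$, so $p(E\cap \Pi^{(A)}_{i_k}) = q_A\, p(\Pi^{(A)}_{i_k})$. Summing over $k$ gives $p(E\cap C) = q_A\, p(C)$, so $q_A = p(E\mid C)$. The same computation with Bob's cells, which lie in $B_0$, gives $q_B = p(E\mid C)$. Therefore $q_A=q_B$. This is the averaging (``sure-thing'') step, and it is routine once the first step is established.
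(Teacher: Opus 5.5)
Your proof is correct. It is essentially the argument the paper gives in operational form for the operational agreement theorem; the classical statement itself is only cited to Aumann. The shared steps are: use finiteness to stabilise $A_n,B_n$, identify a single event on which the two agents' knowledge coincides, and average each agent's constant posterior over that event to get $q_A=p(E\mid C)=q_B$.

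The only difference is that you prove the exact set-theoretic fact that $C$ is a union of cells of both partitions. The paper needs only the almost-sure version $p(A_*\mid B_*)=p(B_*\mid A_*)=1$, and that weaker form is what lets it dispense with the state space $\Omega$.
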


Note the logical structure of Aumann’s theorem. The theorem assumes a state space $\Omega$ and that there is a true state of the world $\omega^* \in \Omega$ that is unaffected by Alice and Bob obtaining information about it through their measurements. In this sense, the framework appears to rely on an \emph{ontological} picture: the results of their measurements are determined by the state of the world, and the very notions of information partitions and common knowledge are defined relative to an underlying state space.

One may therefore wonder whether this assumption is essential, or whether an agreement theorem can still be formulated once it is weakened. In fact, such a formulation is indeed possible, and an agreement theorem can be obtained without this strong ontological assumption.  Surprisingly, this substantially extends the domain of its applicability.

\section{Main result}
\label{sec:main}

\noindent We now formulate an agreement theorem which does not rely on a ``true state of the world'' or, more generally, an underlying ontic state space. We do so by restricting our attention to the space of possible measurement outcomes (see Fig.~\ref{fig:figure}).

Alice and Bob do not share a prior on the state of the world, which may or may not exist, but rather on the results of their measurements. This prior probability may be obtained from any theory, including quantum mechanics or a more exotic postquantum theory. The prior, together with Bayesian reasoning, is all that is required to obtain an agreement theorem that applies well--beyond classical theory.

Alice performs a measurement and observes an outcome $i\in\I$, Bob performs another measurement and observes an outcome $j\in\J$, and both agents are interested in the outcome $k\in\K$ of another measurement. We define the space of possible outcome triples as
$\mathcal{M} := \I \times \J \times \K.$
Assume that Alice and Bob share a common prior distribution $p$ on $\mathcal{M}$, i.e. they agree on the probability $p(i,j,k)$ assigned to each triple $(i,j,k)\in\mathcal{M}$ before conditioning on their respective observations.

If Alice observes a measurement outcome $i$, her posterior probability of the outcome $k \in \K$ is calculated using usual Bayesian reasoning, and similarly for Bob:
\begin{equation}
\begin{aligned}
p_A(k|i) := \frac{p(i,k)}{p(i)},~~~~
p_B(k|j) := \frac{p(j,k)}{p(j)},
\end{aligned}
\end{equation}
where probabilities over a subset of measurement results are obtained, as usual, via marginalization, e.g., ${p(i,k)=\sum_{j\in\mathcal J}p(i,j,k)}.$ 

Let $E\subset \mathcal K$ be the event of interest, and $q_A$ and $q_B$ two probabilities. The sets of indices
\begin{equation}
    \begin{aligned}
        A_0&=\{i\in\mathcal I~|~p_A(E|i)=q_A\},\\
        B_0&=\{j\in\mathcal J~|~p_B(E|j)=q_B\},
    \end{aligned}
\end{equation}
correspond to the outcomes that lead Alice (Bob) to assign posterior $q_A$ ($q_B$) to $E$. In analogy with the classical case, we can further define the sets of outcomes
\begin{equation}
    \begin{aligned}
        A_{n+1}&=\{i\in A_n~|~p_A(B_n|i)=1\},\\
        B_{n+1}&=\{j\in B_n~|~p_B(A_n|j)=1\}.
    \end{aligned}
\end{equation}
Given these, Alice's and Bob's posteriors are \textit{common knowledge} when
\begin{equation}
    (i,j)\in \bigcap_{n=0}^\infty A_n\times B_n.
\end{equation}

We can now state the following 
\begin{theorem}[Operational agreement theorem]\label{thm:generalized-aumann}
    Whenever a physical theory allows to compute a joint probability distribution for the outcomes of measurements, if rational agents with a shared prior have common knowledge of their posteriors, those posteriors will be equal.

    Specifically, let $p(i,j,k)$ be the prior probability of the outcomes of three measurements and let $E$ be a subset of the possible values of $k$. Assume that two rational agents, upon learning outcome $i$ and $j$, respectively, assign posterior probability $q_A$ and $q_B$ to $E$. If it is the case that these posteriors are common knowledge, then $q_A=q_B.$
\end{theorem}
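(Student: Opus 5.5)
The plan is to reduce the statement to the classical Aumann theorem on the finite state space $\Omega:=\mathcal M=\I\times\J\times\K$ with prior $p$. A "state" is then just an outcome triple, used purely as bookkeeping and not as an ontology. On this space, take Alice's partition to be $\Pi^{(A)}_i=\{i\}\times\J\times\K$ and Bob's to be $\Pi^{(B)}_j=\I\times\{j\}\times\K$, and identify $E\subset\K$ with $\I\times\J\times E$. Bayesian conditioning on partition elements then reproduces exactly $p_A(E|i)$ and $p_B(E|j)$ as defined above.

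Rather than invoke the classical theorem (whose definitions of $A_n,B_n$ differ slightly: the condition $\Pi(\omega)\subseteq B_n$ versus $p_A(B_n|i)=1$), I would give a direct self-contained argument.

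First, I discard null outcomes. Posteriors are only defined for $p(i)>0$ and $p(j)>0$, so $A_0$ and $B_0$ contain only such outcomes. The decreasing sequences $A_n$ and $B_n$ live in finite sets, so they stabilise at some $N$. Set $A:=A_N$ and $B:=B_N$; these satisfy $A_{N+1}=A$ and $B_{N+1}=B$. By hypothesis $(i,j)\in A\times B$, so both sets are nonempty. Stability gives the following:
\[
p(i,B)=p(i)\ \ \forall i\in A,\qquad p(A,j)=p(j)\ \ \forall j\in B,
\]
where $p(i,B)=\sum_{j\in B}p(i,j)$ and $p(A,j)$ is defined analogously.

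The key step is to sum these identities. Summing the first over $i\in A$ gives $p(A\times B)=p(A)$, and summing the second over $j\in B$ gives $p(A\times B)=p(B)$. So $p(A)=p(B)=p(A\times B)>0$. It follows that, up to probability zero, the events "Alice's outcome lies in $A$" and "Bob's outcome lies in $B$" coincide. Concretely, $p\big((A\times\J)\setminus(A\times B)\big)=0$ and similarly for $B$, and these equalities also hold jointly with $k\in E$. This is the main point needing care, because the operational definition of knowledge is "probability one" rather than set inclusion. I would verify it by noting that each $p(i,j,k)\ge0$, so every term with $i\in A$ and $j\notin B$ must vanish, and hence $p(A\times B\times E)=p(A\times\J\times E)$.

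Finally, I average the posteriors. Every $i\in A\subseteq A_0$ has $p(i,E)=q_Ap(i)$. Summing over $A$ gives
\[
q_A\,p(A)=p(A\times\J\times E)=p(A\times B\times E).
\]
In the same way, summing $p(j,E)=q_Bp(j)$ over $j\in B$ gives $q_B\,p(B)=p(A\times B\times E)$. Since $p(A)=p(B)>0$, we conclude $q_A=q_B$.

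The argument uses only the existence of the joint distribution $p(i,j,k)$ and marginalisation. That is exactly the generality claimed in Theorem~\ref{thm:generalized-aumann}, so the same proof applies to quantum, indefinite-causal-order and postquantum settings.
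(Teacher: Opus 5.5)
Your proof is correct and follows essentially the same route as the paper: stabilise the decreasing sequences at $A_*=A_N$ and $B_*=B_N$, show $p(A_*)=p(B_*)=p(A_*\times B_*)$ so that $p(A_*,E)=p(B_*,E)$, and then compare $q_A=p(E|A_*)$ with $q_B=p(E|B_*)$. Your summation over $i\in A$ and your nonnegativity argument make explicit two steps the paper compresses: the passage to ``$p(A_*|B_*)=p(B_*|A_*)=1$'' and the claim that ``$q_A=p(E|A_*)$ by definition.'' You also correctly get positivity of $p(A_*)$ from the hypothesis, where the paper simply assumes it.
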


\begin{proof}
	Let $A_*=\bigcap_{n} A_n$ and $B_*=\bigcap_{n} B_n$. If either $A_*$ or $B_*$ are empty, there can be no common knowledge of posteriors. Assume, then, that they are both non--empty and that they have nonzero probability. Since the set of measurement outcomes is finite, there is $N$ large enough so that $A_*=A_N$ and $B_*=B_N.$
    Note that 
\begin{equation}
    \begin{aligned}
        q_A = p(E|A_0)=p(E|A_*)=\frac{p(A_*,E)}{p(A_*)},\\
        q_B=p(E|B_0)=p(E|B_*)=\frac{p(B_*,E)}{p(B_*)},
    \end{aligned}
\end{equation}
by definition. 
Since $p(A_{n}|B_{n+1})=p(B_{n}|A_{n+1})=1$, it follows that $p(A_*|B_*)=p(B_*|A_*)=1$, which in turn implies $p(A_*)=p(B_*)$ and $p(A_*,B_*,E)=p(A_*,E)=p(B_*,E)$. Therefore,
\begin{equation}
    \frac{p(A_*,E)}{p(A_*)}=\frac{p(B_*,E)}{p(B_*)},
\end{equation}
from which the result follows.
\end{proof}

Note that this theorem holds trivially for the classical case. Indeed, given the definitions of Section~\ref{sec:aumann}, one can define
\begin{equation}\label{eq:hvm}
p(i,j,k)= \int_{\Pi_i^{(A)}\cap\Pi_j^{(B)}\cap E} p(\omega) \,\mathrm d\omega
\end{equation}
and all the assumptions of the theorem hold.

However the theorem does not specify how $p(i,j,k)$ is obtained: all it needs is a joint probability distribution over three measurement outcomes. Recognising this allows us to extend the applicability of the theorem to scenarios where probabilities over outcomes of measurements are not obtained, like in~\eqref{eq:hvm}, via marginalisation over states of the world.

\begin{figure*}
\includegraphics[width=0.84\linewidth]{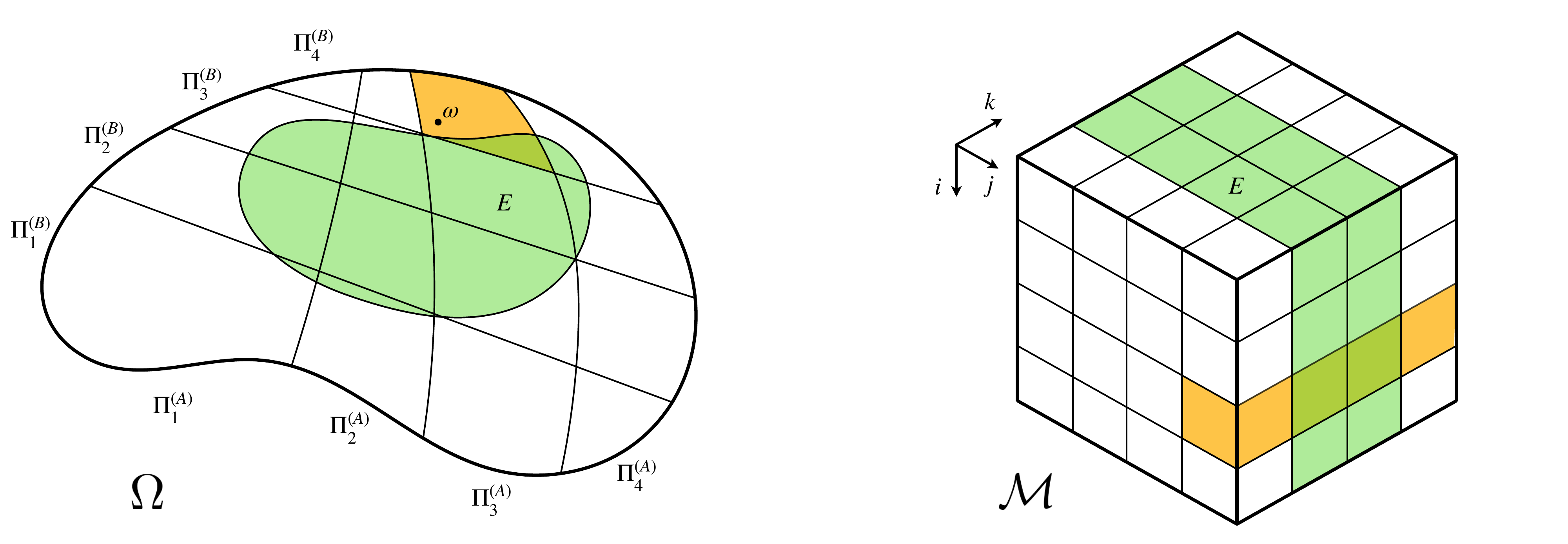} 
    \caption{Left: In the classical agreement theorem, Alice and Bob share a common probability measure $p$ on the state space $\Omega$, and their measurements are described by partitions of $\Omega$; outcomes $i$ and $j$ occur when $\smash{\omega \in \Pi_i^{(A)} \cap \Pi_j^{(B)}}$, and the event $E$ is a subset of $\Omega$. Right: The operational agreement theorem makes no reference to an underlying state of the world and is formulated directly in terms of measurement outcomes.}
    \label{fig:figure}
\end{figure*}

As we will see next, this means that agents cannot ``agree to disagree'' not only when reasoning about classical theory, but in far more exotic settings.

\section{Quantum theory and beyond}

\noindent The theorem we have derived is completely agnostic about the physical origin of the joint probabilities $p(i,j,k)$. This makes it possible to apply it well beyond the classical setting in which Aumann’s theorem is usually formulated. We now show how the same operational reasoning covers arbitrary quantum measurements, including noncommuting measurements and even scenarios in which the order of events is not definite.

Consider the situations in which Alice and Bob perform subsequent measurements on a quantum system, and the event of interest is the outcome of a third subsequent measurement. In this case,
\begin{equation}\label{eq:joint_prob_sequential}
    p(i,j,k)=\tr \mathsf E_k\circ \mathsf B_j\circ \mathsf A_i[\rho],
\end{equation}
where \(\rho\) is the initial state, and \({\mathsf A=\{\mathsf A_i\}_{i\in\mathcal I}}\), \({\mathsf B=\{\mathsf B_j\}_{j\in\mathcal J}}\), and \({\mathsf E=\{\mathsf E_k\}_{k\in\mathcal K}}\) are the instruments describing the three measurements.\footnote{An \textit{instrument} encodes both the measurement probabilities and the state update rule~\cite{davies1970operational,nielsen2010quantum}. It is a collection $\mathsf A=\{\mathsf A_i\},$ of completely positive maps $\mathcal L(\mathcal H)\to\mathcal L(\mathcal H)$ such that $\sum_i\mathsf A_i$ is trace--preserving, ${p(i)=\tr \mathsf A_i[\rho]}$, where the post--measurement state is taken to be $\mathsf A_i[\rho]/p(i)$. For a projective measurement represented by projectors $\{P_i\}$ we would have $P_i[\rho]=P_i\rho P_i$.} The setting closest to Aumann's theorem is the one where the three measurements commute, as recently investigated by~\cite{diaz2025quantum}.
Our theorem, however, does not require \(\mathsf A_i\), \(\mathsf B_j\), and \(\mathsf E_k\) to commute.

As a concrete example, consider the scenario where Alice and Bob perform projective measurements on the same four--dimensional system with \(\mathcal H=\mathbb C^4\). First, Alice performs a measurement on some orthonormal basis \(\{\ket{a_i}\}\), then Bob measures in the basis \(\{\ket{b_j}\}\) defined by
\begin{equation}
\begin{aligned}
    \ket{b_0} &= c_{\theta}\ket{a_0}+s_{\theta}\ket{a_1}, &\ket{b_1} &= -s_{\theta}\ket{a_0}+c_{\theta}\ket{a_1},\\
    \ket{b_2} &= c_{\phi}\ket{a_2}+s_{\phi}\ket{a_3}, &\ket{b_3} &= -s_{\phi}\ket{a_2}+c_{\phi}\ket{a_3},
\end{aligned}
\end{equation}
where \(c_\alpha=\cos\alpha\) and \(s_\alpha=\sin\alpha\). Finally, a third projective measurement is performed, with outcomes ${E_0=\ket{e_0}\!\bra{e_0}}$, and $E_1=\mathbb I-\ket{e_0}\!\bra{e_0},$
with
\begin{equation}
    \ket{e_0}
    =
    \sqrt{q}\ket{b_0}
    +
    \sqrt{q}\ket{b_1}
    +
    \sqrt{r}\ket{b_2}
    +
    \sqrt{1-2q-r}\ket{b_3},
\end{equation}
where \(0<q<1/2\) and \(0<r<1-2q\). For generic values of \((\theta,\phi,q,r)\), the three measurements are noncommuting. The joint probability distribution is given by~\eqref{eq:joint_prob_sequential}
and the posteriors assigned by Alice and Bob to the event \(E\) corresponding to the outcome $k=0$ are
\begin{equation}
    \!\!\!\!\begin{aligned}
        q_A(i)&=\!\big(q,q,c^2_\phi r{+}s^2_\phi(1{-}2q{-}r),\,s^2_\phi r{+}c^2_\phi(1{-}2q{-}r)\big),\\
        q_B(j)&=\!(q,q,r,1{-}2q{-}r).
    \end{aligned}
\end{equation}

This illustrates the statement of Theorem~\ref{thm:generalized-aumann}: whenever there is common knowledge, Alice and Bob must assign the same posterior. For instance, if Alice observes \(i=0\), she can infer that Bob’s outcome is either \(j=0\) or \(j=1\), and hence determine his posterior. The converse holds for Bob and at every higher order of reasoning, implying that there is common knowledge; accordingly, the posteriors coincide. By contrast, if she observes \(i=2\), she does not know Bob's posterior, and their posteriors need not coincide. Once the measurements are fixed, whether common knowledge arises depends on the initial state. For example, if \(\rho\) is supported on the \(\ket{0},\ket{1}\) subspace, then common knowledge always holds; for a generic superposition, it depends on Alice's outcome.

More generally, the agreement theorem will hold for generic measurements $\mathsf A$, $\mathsf B$, and $\mathsf E$, in \textit{whatever} order they are performed.  For example, it will also hold for the case where $\mathsf E$ is performed in between Alice's and Bob's measurement, with
\begin{equation}
    p(i,j,k)=\tr \mathsf {B}_j \circ \mathsf{E}_k\circ \mathsf{A}_i[\rho].
\end{equation}

Indeed, the theorem does not rely on the experiments to be performed in a definite order at all. In the field of indefinite causal order~\cite{oreshkov2012quantum,chiribella2013quantum, araujo2015witnessing}, one asks what is the most general way to assign probability outcomes to measurements in several labs assuming only the local validity of quantum theory in each lab. Alice, Bob, and Eve each receive a quantum system in their lab, apply their instrument, and then send the system out of the lab. It can then be shown that the most general probability assignment to the outcomes of their measurements is
\begin{equation}
    p(i,j,k)=\tr [(\tilde {\mathsf A}_i\otimes\tilde {\mathsf B}_j\otimes \tilde {\mathsf E}_k) W]
\end{equation}
where $\tilde {\mathsf A}_i$ is the positive operator assigned to ${\mathsf A}_i$ via the Choi--Jamiołkoski isomorphism~\cite{jamiolkowski1972linear,choi1975completely} (similarly for $\tilde {\mathsf B}_j$ and $\tilde {\mathsf E}_k$) and $W$, the \textit{process} or \textit{W}-matrix, is some positive operator on the appropriate Hilbert space satisfying a number of consistency requirements~\cite{oreshkov2012quantum}.

A W--matrix can represent situations in which the measurements $\mathsf A$, $\mathsf B$, and $\mathsf E$ happen in some definite but unknown causal order. It can also represent situations in which there is a genuine quantum uncertainty on the order of events, for example, where there is a superposition of Alice's measurement happening before Bob's and Bob's measurement happening before Alice. There are even examples of W--matrices in which no such causal account can be given~\cite{Branciard_2015}. In all of these cases, whenever Alice and Bob have common knowledge of their posteriors, the posteriors will be the same.

Finally, the theorem does not presume the validity of quantum theory at all. For example, it also holds when the probability distribution $p(i,j,k)$ arises from measurements on postquantum systems described by general frameworks such as generalised probabilistic theories~\cite{barrett2007information}, the positive formalism~\cite{oeckl2019local}, or constructor theory~\cite{deutsch2015constructor}. Even if quantum mechanics is eventually superseded by a more fundamental theory, the operational agreement theorem still applies, provided that the theory allows one to compute joint probability distributions for the outcomes of multiple measurements.

\section{Discussion}

\noindent Previous literature on the subject cast doubt on the possibility to extend the agreement theorem beyond quantum theory or even in the case of noncommuting quantum measurements, seemingly contradicting our results.

The seminal Contreras--Tejada \textit{et al.}~\cite{contreras2021observers} derives an agreement theorem for quantum nonsignalling boxes, while showing that postquantum nonsignalling boxes allow for common certainty about different posteriors. There, the events of interest are \textit{counterfactual}: outcomes of measurements that were not performed.\footnote{``If the event $(a = 0, b = 0, x = 0, y = 0)$ is in the sets $A_n$ and $B_n$, for all $n \in \mathbb N$, then, if Alice and Bob \textit{both input $0$} and get output
$0$, we have: Alice assigns probability $q_A$ to $F_B$ (\textit{assuming that Bob input $y = 1$}), Bob is certain that Alice assigns probability $q_A$ to $F_B$, Alice is certain that Bob
is certain that... and so on indefinitely, and vice versa''~\cite[page 5]{contreras2021observers}; emphasis added.} Because of this, they find boxes that allow for ``singular disagreement,'' in which Alice assigns posterior $q_A=1$ to the event of interest while Bob assigns $q_B=0.$ This clearly cannot happen in situations with a well--defined probability distribution over the outcomes, since ${p(k|j)=0}$ for \textit{some} $j,k\in\mathcal J\times \mathcal K$ implies $p(i,j,k)=0$ for all $i\in \mathcal I,$ so whenever Bob assigns posterior $0$ to the event of interest, so must Alice.

More recent work derives an agreement theorem for commuting quantum measurements and showcases an example in which allegedly there is possibility of common knowledge of disagreement if the measurements do not commute~\cite{diaz2025quantum}. There is no contradiction with our theorem, however, as they do not assume the standard postmeasurement update rule to compute posteriors: Alice computes the probability of the event of interest assuming that Bob's system is in a \textit{quantum superposition} of the post--measurement outcomes compatible with her observation. Note that quantum theory would instead tell us that Bob's system is definitely in \textit{one} of the postmeasurement states, albeit unknown to Alice, so Alice should be using a \textit{mixture} rather than a superposition. When all measurements commute, the two update rules yield the same predictions, and common certainty of disagreement is impossible in both frameworks. However, for noncommuting measurements---the regime in which they find common knowledge of disagreement to be possible---the physical interpretation of their nonstandard update rule is unclear.

Our work is aligned and, in a sense, complementary to recent results by Leifer and Duarte~\cite{leifer2026generalising}. 
They extend Aumann's theorem beyond common certainty of \emph{posterior probabilities} to common certainty of \emph{state assignment}, for a quantum or postquantum system $S$: if each agent's state assignment is common knowledge then all agents are assigning the same state to $S.$ The authors remark that all the work in proving the theorem is done by the structure of Bayesian reasoning.

Indeed, the main lesson from the above discussion is that the core of the Agreement theorem does not rely on the existence of an ontology beyond observation, but is actually a statement about the consistency of probabilistic predictions. This can be made explicit by \textit{identifying} the set of outcomes $\mathcal M=\mathcal I\times \mathcal J\times\mathcal K$ with an effective set of possible states of the world $\Omega^\mathrm{op}$. By then setting ${\Pi_i^{(A)}=\{i\}\times\J\times \K}$, 
$\Pi_j^{(B)}=\I\times \{j\}\times\K,$ and
${\mathcal E=\I\times \J\times E}$, we can rely on the mathematical machinery of Aumann's theorem.\footnote{This identification allows us to extend other variants of Aumann's theorem to our most general setting, such as Ref.~\cite{geanakoplos1982we}'s ``dynamical'' version of the theorem, where agents reach agreement by successively disclosing and updating their posteriors; or Ref.~\cite{nielsen1984}, which allows for infinite number of outcomes; or Ref.~\cite{hellman2013almost} which allows from $\varepsilon$--close priors.}

Thus, the reach of the operational agreement theorem is clear: it applies in any setting where such a joint distribution is well defined, independently of how it is obtained. This includes the standard classical case, commuting and noncommuting quantum measurements, scenarios with indefinite causal order described by process matrices, and more general postquantum frameworks. 

Equally clear is where genuinely novel behaviour may emerge: in situations where no joint probability distribution over the relevant outcomes can be meaningfully assigned, and the theorem therefore ceases to apply. This possibility is especially compelling in light of recent discussions of Wigner’s friend and related thought experiments~\cite{brukner2015quantum,bong2020strong,frauchiger2018quantum,schmid2023review,delsanto2025wigners,hausmann2025firewall}. Such scenarios call into question whether outcomes obtained by different observers can always be treated as jointly existing, observer--independent events, and thus provide a natural setting in which to investigate possible departures from the standard agreement framework.

\begin{acknowledgements}
The authors would like to acknowledge the \href{https://www.iqoqi-vienna.at/kefalonia-foundations/participants-kefalonia-2023}{\emph{Kefalonia Foundations 2023}} workshop for fostering the discussions and ideas that led to this work; Fatemeh Bibak for useful discussions during the workshop; and Martin Renner for introducing us to Aumann’s theorem.

This project was funded within the QuantERA II Programme that has received funding from the European Union’s
Horizon 2020 research and innovation programme under Grant Agreement No 101017733, and from the Austrian
Science Fund (FWF), through projects I-6004, ESP2889224, 10.55776/F71, and
10.55776/COE1, as well as Grant No.~I 5384. This work was also made possible through the support of the WOST, WithOut SpaceTime project (\href{https://withoutspacetime.org}{withoutspacetime.org}), supported by Grant ID\#~63683 from the John Templeton Foundation (JTF). The opinions expressed in this work are those of the authors and do not necessarily reflect the views of the John Templeton Foundation.
\end{acknowledgements}

\pagebreak

\bibliography{refs}

\end{document}